
\documentclass[letterpaper, 10 pt, conference]{ieeeconf}  %

\IEEEoverridecommandlockouts%
\overrideIEEEmargins%

\usepackage{graphicx}
\usepackage{float}
\usepackage{booktabs}
\usepackage{amsmath} %
\usepackage{amssymb}  %
\usepackage{mathtools}
\usepackage{xcolor}
\usepackage{cite}

\usepackage{amsthm}

\newtheorem{assumption}{Assumption}
\newtheorem{proposition}{Proposition}
\newtheorem*{remark}{Remark}
\newtheorem{theorem}{Theorem}
\newtheorem{lemma}{Lemma}
\newcommand{\R}{\ensuremath{\mathbb{R}}}
\DeclarePairedDelimiterX{\norm}[1]{\lVert}{\rVert}{#1}

\usepackage[ruled,vlined,linesnumbered]{algorithm2e}
\usepackage[ruled,linesnumbered]{algorithm2e} 
\SetKw{Continue}{continue}
\SetKw{Break}{break}
\SetKwFor{ForAll}{for all}{do}{end}
\SetKwFor{ParForAll}{for all}{in parallel with index $j$ do}{end}
\SetKwFor{ParForAllNoIndex}{for all}{in parallel do}{end}
\SetKwRepeat{Do}{do}{while}
\SetKwInOut{Parameter}{Parameters}
\SetKwComment{Comment}{$\triangleright$\ }{}

\title{\LARGE \bf
    Bayesian Safe Learning and Control with Sum-of-Squares Analysis and Polynomial Kernels
}

\author{Alex Devonport, He Yin, and Murat Arcak%
\thanks{Alex Devonport and Murat Arcak are with the Department of Electrical
Engineering and Computer Sciences, University of California, Berkeley
\texttt{\{alex\_devonport, arcak\}@berkeley.edu}}%
\thanks{He Yin is with the Department of Mechanical Engineering, University of
California, Berkeley \texttt{he\_yin@berkeley.edu}}%
}

\begin{document}

\maketitle
\thispagestyle{empty}
\pagestyle{empty}

\begin{abstract}

We propose an iterative method to safely learn the unmodeled dynamics of a nonlinear system using Bayesian Gaussian process (GP) models with polynomial kernel functions. 
The method maintains safety by ensuring that the system state stays within the region of attraction (ROA) of a stabilizing control policy while collecting data. 
A quadratic programming based exploration control policy is computed to keep the exploration trajectory inside an inner-approximation of the ROA and to maximize the information gained from the trajectory.
A prior GP model, which incorporates prior information about the unknown dynamics, is used to construct an initial stabilizing policy.
As the GP model is updated with data, it is used to synthesize a new policy and a larger ROA, which increases the range of safe exploration. 
The use of polynomial kernels allows us to compute ROA inner-approximations and stabilizing control laws for the model using sum-of-squares programming.
We also provide a probabilistic guarantee of safety which ensures that the policy computed using the learned model stabilizes the true dynamics with high confidence.

\end{abstract}

\section{Introduction}

Learning-based methods allow for the control of systems for which accurate or analytically
tractable models are not available. The learning method constructs a model using data collected from the
system; however, for safety-critical systems this data must be collected in such
a way that the system is not put in danger.

The practice of collecting data from a dynamical system for a learning model
while keeping the system safe is called \emph{safe learning}, and has been
investigated both from the perspective of both robust control 
\cite{jin2018stability, vinogradska2017stability} 
and reinforcement learning 
\cite{chow2018lyapunov, chow2019lyapunov, akametalu2014reachability, 2019Chengend2end}.
Several control-theoretic guarantees may be used to certify safety.
These include Lyapunov functions for the learned model
\cite{berkenkamp2016safe, berkenkamp2016safe2, berkenkamp2017safe, richards2018lyapunov, khansari2014learning, 2019Gallieri, Mohammad2014752}
to ensure stability,
barrier functions to guarantee that the states remain in an invariant set
\cite{wang2018safe, taylor2019learning, ahmadi2020safe, 2019Fan},
and reachability methods such as Hamilton-Jacobi analysis
to ensure that the states can reach a target set and
avoid unsafe sets despite model inaccuracies
\cite{fisac2018general, akametalu2014reachability}.

Gaussian processes (GPs) are a popular model for incorporating learning-based methods
into the analysis of control systems, in particular for safe learning. 
Unlike many learning models, GP models have a closed-form expression for predictions, as well as a quantification of prediction uncertainty. This allows for control-theoretic guarantees to be applied to models learned by a GP.
For example, 
\cite{berkenkamp2016safe, berkenkamp2016safe2}
use a Lyapunov approach to guarantee that a partially unknown system with a
fixed policy is stable with high probability, and to compute a region of attraction.
This approach works by verifying that the Lyapunov condition holds with high
probability for the GP model over a grid of points in the state space. A theorem
in 
\cite{srinivas2009gaussian}
implies under certain conditions that this guarantee on the GP model holds
for the true dynamics as well.
While this approach is effective for verifying the stability of the learned
model, it requires a base stabilizing policy and Lyapunov function. The policy and Lyapunov function stay fixed,
and cannot be improved as a more accurate model is learned.

To lift the fixed Lyapunov function restriction, \cite{umlauft2017learning}
learns a GP state space model for a globally stable system, as well as a sum-of-squares (SOS) Lyapunov function. Since the learned GP model is not necessarily stable, the learned Lyapunov function is used to further stabilize the GP model. However, safe exploration is not considered, and the Lyapunov condition is only loosely enforced on a finite set of states, rather than being guaranteed.
To lift the fixed policy restriction, several recent works use reinforcement learning. In this approach, an iterative reinforcement
learning algorithm like policy gradient
\cite{berkenkamp2017safe} 
or imitation learning 
\cite{taylor2019learning, khansari2014learning} 
is verified at each
iteration by Lyapunov analysis. While this approach allows for policy
optimization, it still relies on a given, fixed Lyapunov function to verify
safety.

In this paper, we propose an algorithm that avoids both restrictions by using a GP model with polynomial kernel functions to model the unknown dynamics.
Polynomial kernel functions allows us to use SOS techniques \cite{Jarvis:05} to synthesize a stabilizing control policy and an inner-approximation to the region of attraction (ROA).
The algorithm uses the GP model in a Bayesian framework for learning: this allows us to incorporate information about the dynamics into a prior model, which we update using data to form a posterior model.
Since the prior and posterior models are both GPs, we can use the same SOS techniques to provide a safety guarantee for both the prior model and the posterior model. We also propose an \emph{exploration policy} which allows for safe exploration inside the ROA inner-approximation to increase the information gained from system trajectories.

We also provide the following theoretical results. First, we establish a probabilistic safety guarantee, which under appropriate system conditions ensures that the stabilizing policy computed for the learned GP model also stabilizes the true system, and that the ROA computed for the learned system is an inner-approximation of the true ROA, with high probability. This result is described in Theorem~\ref{thm:ROA_them}. Second, we construct a polynomial kernel function suitable for modeling uncertain dynamics around a known equilibrium.

\subsection*{Notation}

The subscript $x_i$ denotes the $i^{th}$ element of the vector $x$. The superscript $x^{(i)}$ with parentheses denotes the data point in the data set $\mathcal{D}$ with index $i$. The superscript $x^{i}$ without parentheses denotes an object associated with the $i^{th}$ iteration of an algorithm.

When applied to vectors, the orders $>$, $\le$ are applied elementwise. The operator $\mathbb{E}[\cdot]$ denotes expectation with respect to a probability distribution.

For $\xi \in \mathbb{R}^n$, $\mathbb{R}[\xi]$ represents the set of polynomials in $\xi$ with real coefficients, and $\R^m[\xi]$ and $\R^{m \times p}[\xi]$ denote all vector- and matrix-valued polynomial functions. The subset $\Sigma[\xi] := \{\pi = \sum_{i=1}^M \pi_i^2: \pi_1,...,\pi_M \in \mathbb{R}[\xi]\}$ of $\mathbb{R}[\xi]$ is the set of SOS polynomials in $\xi$. 

\section{Preliminaries}%
\label{sec:preliminaries}

Consider a continuous-time nonlinear system of the form
\begin{equation}
    \label{eq:true_dynamics}
    \dot{x}(t) = f(x(t)) + g(x(t))u(t) + w(x(t)), 
\end{equation}
with state $x(t) \in\R^{n_x}$ and input $u(t) \in\R^{n_u}$.
The system dynamics comprise a known control-affine part, $f$ and $g$, and an unknown term $w$ which must be learned.

We will assume the true model has a known equilibrium at the origin, so that a stabilizing policy and region of attraction (ROA) can be constructed.
\begin{assumption}%
    \label{a:origin_is_equilibrium}
    The origin $(x=0,u=0)$ is an equilibrium of~\eqref{eq:true_dynamics}, that is
    $f(0)=w(0)=0$.
\end{assumption}
To allow for sum-of-squares (SOS) analysis, we make the following assumption:
\begin{assumption}%
    \label{a:polynomial_baseline_model}
    The known dynamics are polynomials: $f(x) \in \R^{n_x}[x]$ and $g(x) \in \R^{n_x \times n_u}[x]$. 
\end{assumption}
The true dynamics need not be polynomial, as the non-polynomial terms can be
absorbed into $w(x)$. 
We also assume that the unknown term can be approximated by a polynomial-kernel Gaussian process (GP).
\begin{assumption}%
    \label{a:uncertainty_approximated_by_polynomial}
    The term $w(x)$ can be approximated by a polynomial in a region $\mathcal{X}\in\R^{n_x}$ containing
    the origin. Specifically, for a given $\epsilon>0$ there is a polynomial
    $q(x)$ such that $\norm{w(x)-q(x)} \le \epsilon$ for all $x\in\mathcal{X}$.
\end{assumption}
For example, if $w$  is analytic in a ball $\mathcal{B}$ containing the origin, then Taylor's theorem ensures that Assumption \ref{a:uncertainty_approximated_by_polynomial} holds in $\mathcal{B}$.

Aside from any prior knowledge, our information about the system will come from measurements of the form $(x^{(i)},u^{(i)}, \dot{x}^{(i)})$. Typically $\dot{x}$ itself is not directly measurable, and is estimated using a finite-difference approximation from measurements of $x$. The finite-difference approximation will be a noisy estimate of $\dot{x}$, and the measurements of $x$ may in practice be noisy as well. 
\begin{assumption}
\label{a:bounded_noise}
We have access to measurements of $\dot{x}$ which are corrupted by noise which is uniformly bounded by $\sigma_n$.
\end{assumption}

Our analysis has three goals. The first is 
to use data collected from system trajectories
to model the unknown part of the dynamics.
The second is to use the learned model to synthesize a stabilizing controller for the system and a ROA inner-approximation which holds for the true dynamics with high probability. The third is to design an exploration controller to maximize the information collected during the exploration trajectory while maintaining it inside the ROA.

\begin{remark}
In \eqref{eq:true_dynamics}, we assume $w$ depends only on $x$. If it depends both on $x$ and $u$, we introduce an auxiliary input state $x_u(t) \in \R^{n_u}$ for $u$, and design the new input $v(t) \in \R^{n_u}$ for $x_u$. This leads to the augmented system
\begin{align}
    \dot{x}(t) &= f(x(t)) + g(x(t))x_u(t) + w(x(t),x_u(t)) \nonumber \\
    \dot{x}_u(t) &= v(t), \nonumber
\end{align}
which recovers the form in \eqref{eq:true_dynamics}. This formulation is demonstrated in Section~\ref{subsec:example_inverted_pendulum}.
\end{remark}

\section{Estimating the Unmodeled Dynamics}%
\label{sec:estimating_the_unmodeled_dynamics}

To estimate the unknown term in a Bayesian framework, we must choose a
\emph{prior distribution} for the system dynamics. The prior model is a probability distribution of candidate functions for $w$, which represents what we know
about the system prior to seeing any data. From Assumption~%
\ref{a:uncertainty_approximated_by_polynomial} we
know that the system can be approximated by a polynomial in a region about the
equilibrium, so we will choose a prior over polynomial functions. Assumption~\ref{a:polynomial_baseline_model} implies that $f(x)+g(x)u$ is an estimate for the true dynamics: assuming this is the best estimate we can make without data, we will
take the prior mean for $w$ to be zero. 

We will use a GP as our prior distribution. A GP $h$
is a probability distribution over functions which is completely characterized by
its mean $m(x) = \mathbb{E}[h(x)]$ and covariance 
$k(x,y)=\mathbb{E}[(h(x)-m(x))(h(y)-m(y))]$. The covariance of a GP prior is also called the \emph{kernel function} of the process. The kernel function determines the class of
functions over which the distribution is defined.
When $k(x,y)$ is polynomial in
$x$ and $y$, the distribution will be over a space of polynomial functions.
We will therefore choose $k(x,y)$ to be a polynomial.

Typically, GPs are presented as distributions of scalar-valued functions. Since the unknown term $w$ is vector-valued, we will model each entry $w_i$ with a separate scalar-valued GP of functions with domain $\R^{n_x}$.
We write our prior distribution for the dynamics as
\begin{equation}
    \label{eq:prior_dynamics}
    \dot{x}(t) = f(x(t)) + g(x(t))u(t) + \hat{w}(x(t)),
\end{equation}
where $\hat{w}(x)$ is a vector of GPs $\hat{w}_i$, each with mean zero and kernel $k_i$.

As we collect data from a
system trajectory, we condition the prior distribution on the data to obtain the 
\emph{posterior distribution}. Like the prior, the posterior is a distribution over functions. 
For a GP prior, the posterior will also be a GP, but with a different mean and covariance which more accurately
represent the ground truth than the prior. 

\subsection{GPs with polynomial kernels}

Consider a scalar GP prior $h$ with mean zero and kernel $k(x,y)$, and a data set $\mathcal{D} = \{(x^{(i)}, y^{(i)})\}_{i=1}^N$ of states $x^{(i)}\in\R^{n_x}$ and labels $y^{(i)}\in\R$. Then the posterior distribution, that is the prior conditioned on the data, is also a GP, whose mean and variance have closed-form solutions~\cite{gpml}. The posterior mean has the form
\begin{equation}
\label{eq:gp_mean}
m(x) = \mathbb{E}[h(x) | \mathcal{D}] = y^\top(K+\sigma_nI)^{-1}k_* 
\end{equation}
where $K$ is the kernel Gramian matrix with elements $(K)_{ij} = k(x^{(i)},x^{(j)})$, $k_*$ is the vector with elements $(k_*)_i = k(x,x^{(i)})$, and $y$ is the vector with $y_i=y^{(i)}$. Letting $c=y^\top(K+\sigma_nI)^{-1}$, we can re-express the mean as
\begin{equation}
    \label{eq:posterior_mean_sum}
    m(x) = \sum_{i=1}^N c_i k(x,x^{(i)}).
\end{equation}
When $k(x,y)$ is a polynomial in $x$ and $y$, (\ref{eq:posterior_mean_sum}) shows that $m(x)$ is also a polynomial, of the same degree as the kernel.

The posterior variance has the form
\begin{equation}
\begin{split}
\label{eq:gp_var}
    Var(x)&=
    \mathbb{E}[(h(x)-m(x))^2|\mathcal{D}]\\
    &= k(x,x) - k_*^\top(K+\sigma_nI)^{-1}k_*,
\end{split}
\end{equation}
which is a polynomial when $k(x,y)$ is a polynomial.
The degree of $Var(x)$ will be twice the degree of $m(x)$.%

\subsection{Choice of polynomial kernel}

The spaces of polynomials from which the mean and variance are drawn depend on the specific choice of polynomial kernel. In particular, the mean is drawn from the \emph{reproducing kernel Hilbert space} (RKHS) $\mathcal{H}(k)$ of the kernel $k$. The kernel must be chosen so that the functions in $\mathcal{H}(k)$ satisfy 
Assumption \ref{a:origin_is_equilibrium}.
To construct a suitable kernel, we use two classic results which follow from~\cite{aronszajn1950theory}:
\begin{proposition}
\label{prop:rkhs_homogeneous}
The RKHS $\mathcal{H}(k)$ of the \emph{homogeneous polynomial kernel} $k(x,y)=\alpha^2(x^\top y)^p$ is spanned by the monomials of degree $p$, that is by monomials $\prod_{i=1}^d x_i^{p_i}$ such that $\sum_i p_i=p$.
\end{proposition}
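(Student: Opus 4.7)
The plan is to obtain an explicit finite‑dimensional feature‑map representation of $k$ and then invoke the standard identification of the RKHS of such a kernel with the image of its feature map.

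First, I would apply the multinomial theorem to expand
\begin{equation*}
k(x,y) \;=\; \alpha^2 (x^\top y)^p \;=\; \alpha^2 \sum_{|\mathbf{p}|=p} \binom{p}{\mathbf{p}}\, x^{\mathbf{p}} y^{\mathbf{p}},
\end{equation*}
where the sum runs over all multi-indices $\mathbf{p}=(p_1,\dots,p_{n_x})$ with $\sum_i p_i = p$, $\binom{p}{\mathbf{p}}$ is the multinomial coefficient, and $x^{\mathbf{p}} := \prod_i x_i^{p_i}$. Setting $D = \binom{n_x+p-1}{p}$, this exhibits $k$ as an inner product $k(x,y)=\Phi(x)^\top \Phi(y)$ in $\R^D$, where $\Phi:\R^{n_x}\to\R^D$ has entries $\Phi_{\mathbf{p}}(x)=\alpha\sqrt{\binom{p}{\mathbf{p}}}\,x^{\mathbf{p}}$ indexed by the multi-indices $|\mathbf{p}|=p$.

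Second, I would appeal to the standard Aronszajn/Moore characterization: for any kernel admitting a finite-dimensional feature decomposition $k(x,y)=\Phi(x)^\top\Phi(y)$ with linearly independent component functions $\Phi_{\mathbf{p}}$, the associated RKHS is exactly
\begin{equation*}
\mathcal{H}(k) \;=\; \bigl\{\, x\mapsto w^\top \Phi(x) : w\in\R^D \,\bigr\},
\end{equation*}
with inner product $\langle w_1^\top\Phi,\, w_2^\top\Phi\rangle_{\mathcal{H}(k)} = w_1^\top w_2$. Linear independence of the $\Phi_{\mathbf{p}}$ is immediate here, because distinct monomials are linearly independent as polynomials on $\R^{n_x}$ and the scaling factors $\alpha\sqrt{\binom{p}{\mathbf{p}}}$ are nonzero.

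Third, since the component functions $\Phi_{\mathbf{p}}$ are (up to nonzero scalars) exactly the monomials $x^{\mathbf{p}}$ of total degree $p$, every $f\in\mathcal{H}(k)$ is a linear combination of such monomials, and conversely every linear combination of such monomials arises from some $w\in\R^D$. This yields the claimed basis. The only subtle point I anticipate is the linear-independence verification used to justify that no relations are imposed on the $\Phi_{\mathbf{p}}$ by the kernel structure; but as noted this is immediate from the algebraic independence of distinct monomials, so the argument is essentially a matter of organizing the multinomial expansion and citing the feature-map characterization of the RKHS.
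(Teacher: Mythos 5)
Your proof is correct, and it is essentially the argument the paper itself relies on: the paper offers no written proof, stating only that the proposition is a classic result following from Aronszajn's theory of reproducing kernels, which is precisely the multinomial feature-map expansion plus the finite-dimensional RKHS characterization that you spell out. The linear-independence check you flag is the right detail to verify, and your justification of it is sound.
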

Here, $\alpha^2$ and $p$ are hyperparameters: $\alpha^2$ is a scaling factor, and $p$ sets the polynomial degree.
\begin{proposition}
\label{prop:rkhs_sum}
Let $k_1$ and $k_2$ be two kernels of finite-dimensional RKHSs. Then $k_1+k_2$ is also a kernel, and $\mathcal{H}(k_1+k_2)$ is spanned by the concatenation of the spans of $\mathcal{H}(k_1)$ and $\mathcal{H}(k_2)$.
\end{proposition}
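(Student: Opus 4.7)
The plan is to verify both claims by directly appealing to the standard kernel-theoretic machinery of Aronszajn. First I would establish that $k_1+k_2$ is a kernel. A symmetric function $k:\mathcal{X}\times\mathcal{X}\to\R$ is a kernel iff it is positive semidefinite, i.e., every Gramian $[k(x^{(i)},x^{(j)})]$ is PSD. Since $k_1$ and $k_2$ are each symmetric and PSD, and PSDness is preserved under addition (the Gramian of $k_1+k_2$ is the sum of the two individual Gramians), $k_1+k_2$ is a kernel, and by the Moore--Aronszajn theorem it has a unique associated RKHS $\mathcal{H}(k_1+k_2)$.

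Next I would identify $\mathcal{H}(k_1+k_2)$ with the vector sum $\mathcal{H}(k_1)+\mathcal{H}(k_2):=\{f_1+f_2 : f_1\in\mathcal{H}(k_1),\,f_2\in\mathcal{H}(k_2)\}$. This is Aronszajn's classical sum-of-kernels result. The argument is to endow the sum space with the norm
\begin{equation*}
\norm{f}^2_{k_1+k_2} := \inf\bigl\{\norm{f_1}^2_{k_1}+\norm{f_2}^2_{k_2} : f=f_1+f_2\bigr\},
\end{equation*}
check that this defines a Hilbert space, and then verify the reproducing property: for any $x$, the function $k_1(\cdot,x)+k_2(\cdot,x)$ lies in the sum space and satisfies $\langle f, k_1(\cdot,x)+k_2(\cdot,x)\rangle = f(x)$. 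Uniqueness of the RKHS associated with a kernel then gives $\mathcal{H}(k_1+k_2)=\mathcal{H}(k_1)+\mathcal{H}(k_2)$ as sets.

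Finally, to get the spanning statement, I would use the finite-dimensional hypothesis. If $\{\phi_1,\ldots,\phi_r\}$ spans $\mathcal{H}(k_1)$ and $\{\psi_1,\ldots,\psi_s\}$ spans $\mathcal{H}(k_2)$, then every element $f_1+f_2$ of the vector sum is a linear combination of the concatenated collection $\{\phi_1,\ldots,\phi_r,\psi_1,\ldots,\psi_s\}$, and conversely every such linear combination belongs to the sum. Hence this concatenation spans $\mathcal{H}(k_1+k_2)$, as claimed. (Note: it need not be a basis, since $\mathcal{H}(k_1)\cap\mathcal{H}(k_2)$ may be nontrivial, but the proposition only asserts spanning.)

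The main obstacle is the middle step, the verification of the reproducing property and completeness for the sum space under the infimum norm; the symmetry/PSD check and the finite-dimensional spanning reduction are essentially bookkeeping. Since this middle step is precisely the content of the sum-of-kernels theorem in Aronszajn's 1950 paper cited in the proposition, the proof would ultimately cite that result rather than rederive it.
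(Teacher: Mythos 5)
Your proposal is correct and matches the paper's treatment: the paper states this proposition without proof as a classical consequence of Aronszajn's 1950 results, which is exactly the sum-of-kernels theorem you invoke for the key middle step. Your added detail (PSD-ness under addition, the infimum norm on the vector sum, and the finite-dimensional spanning reduction) is accurate and fills in the standard argument the paper leaves implicit.
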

For example, the function $(x^\top y)^2+(x^\top y)^3$ is a kernel function whose RKHS is spanned by the monomials of degrees 2 and 3. This motivates the following choice of kernel:
\begin{equation}
\label{eq:poly_kernel_zao}
    k(x,y) = \alpha_1^2(x^\top y) + \alpha_2^2(x^\top y)^2 + \dotso + \alpha_p^2(x^\top y)^p.
\end{equation}
By Propositions \ref{prop:rkhs_homogeneous} and \ref{prop:rkhs_sum}, the RKHS of this kernel is spanned by all monomials of degree $\le p$ \emph{except} for degree zero. In other words, the RKHS spans all polynomials $q$ of degree $\le p$ that satisfy $q(0)=0$.

In Section \ref{subsec:probabilistic_bounds} we will see that the range of possible unknown terms admitted by the prior model is bounded with high probability by a multiple of $\sqrt{Var(x)}$, so that a higher variance admits a larger class of functions for the unknown term. 
Therefore, any prior knowledge about the general form of the unknown term or the range of values it can take on should be used to select the kernel. While keeping the form $\eqref{eq:poly_kernel_zao}$, this information can be used to choose the hyperparameters $\alpha_i^2$.
For instance, if the baseline $f$ and $g$ are known to be accurate up to degree 2, then $\alpha_1^2$ and $\alpha_2^2$ can be set to small values, while the other $\alpha_i$ are set to high values. Another example is if the dynamics are known \emph{a priori} to be even (or odd); then, the prior kernel need only contain terms of even (or odd) degree. %

\subsection{Probabilistic Bounds on the GP Model}
\label{subsec:probabilistic_bounds}
The following inequality from \cite{srinivas2009gaussian} provides a probabilistic bound on the values that the functions in the distribution of a GP can take over its domain.
\begin{lemma}[Theorem 6 of \cite{srinivas2009gaussian}]
\label{lem:krause_inequality}
Suppose we have data $\{x^{(i)},y^{(i)}\}_{i=1}^N$ from a function $h\in\mathcal{H}(k)$ that satisfies $\norm{h}_k\le\infty$, where $\norm{\cdot}_k$ is the norm of $\mathcal{H}(k)$. The data may be corrupted with noise uniformly bounded by $\sigma_n$. Let $\beta_N = 2\norm{h}_k^2 + 300\gamma_N\log^3(N/\delta)$, where $\gamma_N$ is the maximum mutual information that can be obtained for the GP prior with $N$ samples corrupted with noise bounded by $\sigma_n$. Let $\delta\in(0,1)$. Then the inequality
\begin{equation} \label{eq:hhatbound}
    |h(x)-m_{\hat{h}}(x)| \le \sqrt{\beta_N} \sigma_{\hat{h}}(x)
\end{equation}
holds with probability $\ge 1 - \delta$, where $m_{\hat{h}}(x)$ and $\sigma_{\hat{h}}(x)=\sqrt{Var_{\hat{h}}(x)}$ are the mean and standard deviation of the GP $\hat{h}$ with kernel function $k$ conditioned on the data.
\end{lemma}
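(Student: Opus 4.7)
The plan is to follow the strategy of Srinivas et al.\ by decomposing the pointwise error between $h$ and the posterior mean $m_{\hat{h}}$ into a deterministic RKHS approximation error and a stochastic noise-driven term, each of which I would then control in units of $\sigma_{\hat{h}}(x)$. Writing the noisy observation vector as $y = h_N + \epsilon$, where $h_N$ is the clean evaluation vector of $h$ and $\epsilon$ is the bounded noise, the representer formula for the posterior mean gives
\begin{equation*}
    h(x) - m_{\hat{h}}(x) = \bigl(h(x) - k_*^\top (K + \sigma_n I)^{-1} h_N\bigr) \;-\; k_*^\top (K + \sigma_n I)^{-1} \epsilon,
\end{equation*}
which is the split I would analyze term by term.

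For the deterministic summand I would use the reproducing property $h(x) = \langle h, k(\cdot,x)\rangle_k$ and recognize $k_*^\top (K+\sigma_n I)^{-1} h_N$ as the RKHS inner product of $h$ with the projection of $k(\cdot,x)$ onto the noise-regularized data subspace spanned by $\{k(\cdot,x^{(i)})\}$. Cauchy--Schwarz then bounds this term by $\norm{h}_k$ times the norm of the residual, and a standard identity relates that residual norm to the posterior standard deviation $\sigma_{\hat{h}}(x)$. This contributes the $2\norm{h}_k^2$ term inside $\beta_N$.

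For the stochastic summand, because $\{x^{(i)}\}$ may be chosen adaptively, the noise $\epsilon$ enters through data-dependent weights, so I would treat $k_*^\top (K+\sigma_n I)^{-1}\epsilon$ as the terminal value of a martingale and apply a self-normalized concentration inequality. The effective variance proxy telescopes into $\tfrac{1}{2}\sum_{i=1}^N \log\bigl(1 + \sigma_n^{-2}\sigma_{\hat{h}}^2(x^{(i)})\bigr)$, which is exactly the mutual information between the observations and $h$ and is therefore upper bounded by $\gamma_N$; this is how $\gamma_N$ enters $\beta_N$. To make the bound valid simultaneously for every $x\in\mathcal{X}$ and every sample size, I would combine the concentration estimate with a union bound over a discretization of $\mathcal{X}$ and a peeling argument in $N$, producing the $\log^3(N/\delta)$ polylogarithmic factor.

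The main obstacle is this noise-concentration step: one must obtain a tail inequality that is at once (i) valid uniformly in $x$, (ii) self-normalized so that the multiplier of $\sigma_{\hat{h}}(x)$ does not depend on the observations, and (iii) robust to adaptive sample selection. Meeting all three requirements is what forces the specific $300\,\gamma_N \log^3(N/\delta)$ pre-factor in $\beta_N$, and is the crux of the argument; a triangle inequality then merges the approximation and noise bounds into the stated inequality~\eqref{eq:hhatbound} at confidence $1-\delta$.
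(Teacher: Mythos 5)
The paper offers no proof of this lemma: it is imported verbatim as Theorem~6 of \cite{srinivas2009gaussian}, so there is nothing internal to compare your argument against, and your sketch has to be judged against the cited source. Against that benchmark, your outline reproduces the genuine architecture of the original proof: the split of $h(x)-m_{\hat h}(x)$ into an RKHS approximation term and a noise term, the reproducing-property/Cauchy--Schwarz bound of the first term by $\norm{h}_k$ times the RKHS norm of the residual $k(\cdot,x)-\textstyle\sum_i\alpha_i k(\cdot,x^{(i)})$ with $\alpha=(K+\sigma_n I)^{-1}k_*$ (and that residual norm is indeed dominated by $\sigma_{\hat h}(x)$), and the control of the noise term by a self-normalized martingale inequality whose variance proxy telescopes into the information gain $\tfrac12\sum_i\log\bigl(1+\sigma_n^{-2}\sigma_{\hat h}^2(x^{(i)})\bigr)\le\gamma_N$. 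Two caveats are worth recording. First, in the source the uniformity over $x$ falls out of the Cauchy--Schwarz step itself, which isolates an $x$-independent quadratic form in the noise vector that is the only object required to concentrate; no discretization of $\mathcal{X}$ or covering argument is used, and one would be awkward on an unbounded domain, so that part of your plan is a detour. Second, and more substantively for your concentration step: the original theorem assumes the noise is a \emph{bounded martingale difference sequence} (conditionally zero-mean given the history), a hypothesis the paper's restatement silently drops. Under the hypothesis as transcribed here --- noise that is merely uniformly bounded by $\sigma_n$, which is also how the paper later uses the lemma in Proposition~\ref{prop:krause_inequality_holds}, where a deterministic polynomial-approximation error is folded into the ``noise'' --- the martingale concentration does not apply and a persistent bias of order $\sigma_n$ survives; your proof is therefore valid only under the original hypotheses, not the weakened ones stated. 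Finally, you correctly identify the self-normalized tail inequality as the crux but do not supply or prove it, so as written this is a faithful proof plan rather than a complete proof.
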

The inequality \eqref{eq:hhatbound} transforms the problem of providing a probabilistic guarantee for a GP into the problem of providing a guarantee over functions with a given upper and lower bound. In Section \ref{sec:estimating_the_region_of_attraction}, we will show that for GPs with polynomial kernels, this further transforms into a problem that may be solved with SOS programming.

The assumptions we have made on the system allow us to use this inequality in our analysis.
\begin{proposition}
\label{prop:krause_inequality_holds}
For a kernel $k(x,y)$ of the form (\ref{eq:poly_kernel_zao}) with sufficiently high degree $p$, measurements of $w$ can be used to construct a GP model which satisfies the inequality in Lemma \ref{lem:krause_inequality}.
\end{proposition}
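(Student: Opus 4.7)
The plan is to verify the two hypotheses of Lemma~\ref{lem:krause_inequality} for each scalar component $w_i$ of the unknown dynamics: that the target function lies in $\mathcal{H}(k)$ with finite RKHS norm, and that the labels are corrupted only by uniformly bounded noise. The subtlety is that $w_i$ itself need not be polynomial, so the lemma cannot be applied to $w_i$ directly; instead I would apply it to a polynomial surrogate and absorb the approximation error into the noise.

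First, I would fix a component $w_i$ and invoke Assumption~\ref{a:uncertainty_approximated_by_polynomial} to obtain a polynomial $q_i(x)$ satisfying $|w_i(x)-q_i(x)|\le\epsilon$ on $\mathcal{X}$, with some degree $d_i$. Because Assumption~\ref{a:origin_is_equilibrium} gives $w_i(0)=0$, we have $|q_i(0)|\le\epsilon$, so I would replace $q_i$ by $\tilde q_i(x):=q_i(x)-q_i(0)$, which is a polynomial of the same degree that vanishes at the origin and satisfies $|w_i(x)-\tilde q_i(x)|\le 2\epsilon$ on $\mathcal{X}$. Choosing the kernel degree $p\ge d_i$ (taken large enough for every component), Propositions~\ref{prop:rkhs_homogeneous} and~\ref{prop:rkhs_sum} imply that $\mathcal{H}(k)$ is spanned by all monomials of degree between $1$ and $p$; hence $\tilde q_i\in\mathcal{H}(k)$. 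Since this RKHS is finite-dimensional, $\|\tilde q_i\|_k<\infty$ automatically, giving the first hypothesis of the lemma.

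Second, I would examine the data. The labels presented to the GP for component $i$ are $y^{(i,j)}=\dot x_i^{(j)}-f_i(x^{(j)})-g_i(x^{(j)})u^{(j)}$, which by \eqref{eq:true_dynamics} equal $w_i(x^{(j)})+\nu^{(j)}$, where $|\nu^{(j)}|\le\sigma_n$ by Assumption~\ref{a:bounded_noise}. Rewriting in terms of $\tilde q_i$,
\begin{equation*}
y^{(i,j)} = \tilde q_i(x^{(j)}) + \bigl(w_i(x^{(j)})-\tilde q_i(x^{(j)})\bigr) + \nu^{(j)},
\end{equation*}
so the effective noise on the surrogate target $\tilde q_i$ is uniformly bounded by $\sigma_n+2\epsilon$. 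This verifies the second hypothesis of Lemma~\ref{lem:krause_inequality} with the noise level inflated from $\sigma_n$ to $\sigma_n+2\epsilon$. Applying the lemma to $\tilde q_i$ yields a GP posterior $\hat w_i$ for which $|\tilde q_i(x)-m_{\hat w_i}(x)|\le\sqrt{\beta_N}\,\sigma_{\hat w_i}(x)$ with probability at least $1-\delta$, and the full conclusion for $w_i$ follows by a triangle inequality that adds a deterministic $2\epsilon$ term.

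The main obstacle is conceptual rather than technical: the lemma is stated for functions lying exactly in the RKHS, but the true dynamics generally do not. The key move is the reduction above, in which the polynomial-approximation error from Assumption~\ref{a:uncertainty_approximated_by_polynomial} and the centering correction needed to enforce $\tilde q_i(0)=0$ are both folded into the uniform noise bound. Once this reduction is made, finite-dimensionality of the polynomial RKHS makes finiteness of the norm automatic, and the proposition reduces to selecting $p$ at least as large as the maximum degree among the component-wise polynomial approximations of $w$.
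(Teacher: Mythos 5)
Your proposal follows essentially the same route as the paper's proof: approximate each $w_i$ by the polynomial $q_i$ from Assumption~\ref{a:uncertainty_approximated_by_polynomial}, place it in the finite-dimensional RKHS of a kernel of the form~\eqref{eq:poly_kernel_zao} with $p$ at least the degree of $q_i$ (so the RKHS norm is automatically finite), and fold the approximation error together with the measurement noise of Assumption~\ref{a:bounded_noise} into a uniform noise bound before invoking Lemma~\ref{lem:krause_inequality}. Your extra centering step, replacing $q_i$ by $\tilde q_i = q_i - q_i(0)$ at the cost of inflating $\epsilon$ to $2\epsilon$, is a worthwhile refinement the paper omits, since the RKHS of~\eqref{eq:poly_kernel_zao} contains no constant term and Assumption~\ref{a:uncertainty_approximated_by_polynomial} does not by itself guarantee $q_i(0)=0$.
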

Proposition \ref{prop:krause_inequality_holds} is proved in Appendix~\ref{app:prfprop4}. 

The quantity $\gamma_N$ is difficult to compute exactly for most kernels, and differs for each data set size $N$. However, for many commonly-used kernels it has a sublinear dependence on $N$, and can be effectively approximated up to a constant \cite{srinivas2009gaussian}. We will assume through the rest of the paper that the quantity $\sqrt{\beta_N}$ can be bounded by a constant parameter $\eta$. The parameter $\eta$ is higher for smaller values of $\delta$, i.e. for probabilistic bounds of higher confidence.

\section{Estimating the Region of Attraction}%
\label{sec:estimating_the_region_of_attraction}

For safe learning with a GP model we must ensure that there is a region of state space which we are confident can be explored safely. To do this, we will synthesize a memoryless, state feedback control policy $\kappa$ and a
Lyapunov function $V$ which guarantee that the closed-loop system
\begin{equation}
    \label{eq:true_cl_dynamics}
    \dot{x}(t) = f(x(t)) + g(x(t))\kappa(x(t)) + w(x(t))
\end{equation}
is stable around the origin with high confidence according to our model. By this, we
mean that the closed-loop system satisfies the inequality
\begin{equation}
    \label{eq:probabilistic_vdot_inequality}
    \tfrac{\partial V(x)}{\partial x} \cdot (f(x) + g(x)\kappa(x) + w(x)) < 0
\end{equation}
with probability $\ge 1-\delta$, with $\delta\in(0,1)$, for all points in a set $\mathcal{R} \backslash 0$, where
\begin{align}
\label{eq:probabilistic_roa}
\mathcal{R}=\{x \in \R^{n_x} \vert V(x) \le\gamma\}, \ \text{for some} \ \gamma > 0.
\end{align}
This set is an inner-approximation of the
origin's ROA, which we will make as large as possible.

We can ensure that the inequality (\ref{eq:probabilistic_vdot_inequality}) holds with high probability on the true dynamics by ensuring it holds for the deterministic bounds $|w_i(x)-m_i(x)| \le \eta \sigma_i(x)$, $i=1,\dotsc,n_x$, which are derived from Lemma \ref{lem:krause_inequality}. 
The link between the deterministic bound on the GP model and the probabilistic guarantee of stability for the true dynamics is stated in the following theorem.

\begin{theorem} \label{thm:ROA_them}
Suppose the true dynamics satisfy the assumptions outlined in Section \ref{sec:preliminaries}. Let $\eta$ be a bound on the parameter $\sqrt{\beta_N}$ such that Lemma \ref{lem:krause_inequality} holds for each $w_i$ with a given $\delta\in(0,1)$. Let $\sigma(x)$ be the vector of standard deviations $\sigma_i(x)=\sqrt{Var_i(x)}$, where $Var_i(x)$ is the variance of the GP $\hat{w}_i$.
Given $f,g$ defined in \eqref{eq:true_dynamics}, and $\gamma > 0$, if there exists a control law $\kappa:\R^{n_x} \rightarrow \R^{n_u}$, and a $\mathcal{C}^1$ function $V: \R^{n_x} \rightarrow \R$, such that $V(0)=0$ and $V(x)>0$ for all $x\in\R^{n_x}\backslash 0$, and
\begin{equation}
    \label{eq:Vdot}
    \tfrac{\partial V(x)}{\partial x} \cdot (f(x)+g(x)\kappa(x)+m(x)+d(x)) <0
\end{equation}
holds in a bounded region $\mathcal{R}\backslash 0 \subset \R^{n_x}$ for all vector functions $d$ bounded by $-\eta\sigma(x)\le d(x) \le \eta\sigma(x)$,
then $\mathcal{R}$ is an inner-approximation to the ROA of \eqref{eq:true_dynamics} with probability $\ge 1-\delta$ over the GP distribution.
\end{theorem}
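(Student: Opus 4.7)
The plan is to combine the high-probability pointwise bound from Lemma~\ref{lem:krause_inequality} with a standard Lyapunov sublevel-set argument. The worst-case bound in (\ref{eq:Vdot}) over all admissible $d(x)$ is deterministic, while the safety claim for the true dynamics (\ref{eq:true_cl_dynamics}) is probabilistic, so the proof's job is to pass from one to the other using the GP concentration inequality.

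First, I would invoke Lemma~\ref{lem:krause_inequality} componentwise for each $\hat{w}_i$. By the choice of $\eta$ stated in the theorem, with probability at least $1-\delta$ (after a union bound over $i=1,\dotsc,n_x$, absorbed into the choice of $\eta$ as the paper indicates), the event
\begin{equation}
    \mathcal{E} = \bigl\{\, |w_i(x) - m_i(x)| \le \eta\sigma_i(x) \text{ for all } x\in\mathcal{R},\ i=1,\dotsc,n_x \,\bigr\}
\end{equation}
occurs. On $\mathcal{E}$, the specific residual $d^\star(x) := w(x) - m(x)$ satisfies the elementwise bound $-\eta\sigma(x) \le d^\star(x) \le \eta\sigma(x)$, so it is one of the admissible $d$ appearing in the quantifier of (\ref{eq:Vdot}).

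Second, I would substitute $d = d^\star$ into (\ref{eq:Vdot}). This rewrites $f(x)+g(x)\kappa(x)+m(x)+d^\star(x) = f(x)+g(x)\kappa(x)+w(x)$, which is exactly the true closed-loop vector field of (\ref{eq:true_cl_dynamics}). Thus, on $\mathcal{E}$,
\begin{equation}
    \tfrac{\partial V(x)}{\partial x}\cdot\bigl(f(x)+g(x)\kappa(x)+w(x)\bigr) < 0
    \quad\text{for all } x\in\mathcal{R}\setminus\{0\}.
\end{equation}
Third, I would apply the standard Lyapunov ROA theorem to the sublevel set $\mathcal{R}=\{V \le \gamma\}$: since $V(0)=0$, $V>0$ on $\R^{n_x}\setminus\{0\}$, $\mathcal{R}$ is bounded, and $\dot V < 0$ on $\mathcal{R}\setminus\{0\}$, the set $\mathcal{R}$ is forward invariant under (\ref{eq:true_cl_dynamics}) and every trajectory starting in $\mathcal{R}$ converges to the origin. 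Hence $\mathcal{R}$ lies in the ROA of the true dynamics on $\mathcal{E}$, and since $\Pr(\mathcal{E})\ge 1-\delta$, the claim follows.

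The main obstacle is not the Lyapunov step, which is essentially textbook once the deterministic vector field is in hand; rather, it is carefully justifying that the pointwise, probabilistic GP bound of Lemma~\ref{lem:krause_inequality} (which as stated gives a bound at each $x$ with probability $1-\delta$) can be lifted to a uniform-in-$x$, uniform-in-$i$ statement inside $\mathcal{R}$ with the same $\delta$. I expect this to be handled by choosing $\eta$ conservatively to absorb a union bound over the $n_x$ coordinates and to rely on the uniform flavor of Theorem~6 of \cite{srinivas2009gaussian}, consistent with the hypothesis of the theorem that $\eta$ is a single constant making Lemma~\ref{lem:krause_inequality} valid for every $w_i$ simultaneously at confidence $\delta$.
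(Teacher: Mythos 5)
Your proposal follows essentially the same route as the paper's proof: invoke Lemma~\ref{lem:krause_inequality} to get the high-probability event on which $w(x)-m(x)$ is an admissible $d$, substitute to recover the true closed-loop vector field, and conclude via the standard Lyapunov sublevel-set argument. Your write-up is in fact more careful than the paper's---in particular, you explicitly flag the union bound over the $n_x$ coordinates and the uniformity in $x$, both of which the paper's one-paragraph proof leaves implicit.
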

\begin{proof}
Proposition \ref{prop:krause_inequality_holds} establishes that Lemma \ref{lem:krause_inequality} holds for the true dynamics and the GP model. Therefore, the bounds $|w_i(x)-m_i(x)| \le \eta\sigma_i(x)$, and equivalently the bound $-\eta\sigma(x)\le d(x) \le \eta\sigma(x)$, hold with probability $\ge 1-\delta$. Since $V$ guarantees that (\ref{eq:Vdot}) holds for $-\eta\sigma(x)\le d(x) \le \eta\sigma(x)$, it follows by Lemma \ref{lem:krause_inequality} that the same $V$ ensures that (\ref{eq:probabilistic_vdot_inequality}) holds with probability $\ge 1-\delta$ for $x\in\mathcal{R}\backslash 0$. This ensures that $\mathcal{R}$ is a ROA inner-approximation for the true dynamics with probability $\ge 1-\delta$.
\end{proof}

We restrict decision variables $V$ and $\kappa$ to be polynomials in order to use SOS analysis~\cite{Jarvis:05} to synthesize them such that the condition \eqref{eq:Vdot} holds. The condition \eqref{eq:Vdot} is a set containment constraint, and the generalized S-procedure \cite{Parrilo:00} can be used to derive the corresponding SOS constraint for it. To do this, we must express the bound $-\eta\sigma(x) \le d(x) \le \eta\sigma(x)$ as a semi-algebraic set. The bound can be %
described by a number of $n_x$ quadratic constraints: for $i = 1,...,n_x$,
\begin{align}
\eta^2 \sigma_i(x)^2 - d_i^2(x) = \eta^2 Var_i(x) - d_i^2(x) \ge 0 \ \forall x, \label{eq:choice2}
\end{align}
which use the polynomial $Var$ directly. Define polynomials $p_{d,i}(x,d) = \eta^2 Var_i(x) - d_i^2$ for $i = 1,...,n_x$. By choosing the volume of $\mathcal{R}$ as the reward function to be maximized, and applying the generalized S-procedure to  \eqref{eq:Vdot}, we obtain the following SOS optimization problem (dropping dependence on $x$ and $t$ for compactness of notation):
\begingroup
\allowdisplaybreaks
\begin{subequations}\label{eq:sos_ROA}
\begin{align} 
	\sup_{V,\kappa,s} &\text{Volume}(\mathcal{R}) \nonumber \\
	\ \text{s.t.} \ \ & s_V, s_{d,i} \in \Sigma[(x,d)], \nonumber \\
	& V - \epsilon_1 x^\top x \in \Sigma[x], \kappa \in \R^{n_u}[x],  \\
	& -(\tfrac{\partial V}{\partial x} \cdot (f+g\kappa + m + d) + \epsilon_2 [x;d]^\top[x;d])\nonumber \\
	& \quad \quad  + (V - \gamma)s_V - \sum_{i=1}^{n_x} s_{d,i} p_{d,i} \in \Sigma[(x,d)],
\end{align}
\end{subequations}
\endgroup
where $\epsilon_1$ and $\epsilon_2$ are small positive numbers. The optimization~\eqref{eq:sos_ROA} is a non-convex problem, since it is bilinear in two sets of decision variables $V$ and $(s_V, \kappa)$. It can be handled by alternating the search over these two sets of decision variables, since holding one set fixed while optimizing over the other results in a convex problem. The procedure is summarized in Algorithm~\ref{alg:solve_sos} in Appendix~\ref{app:iteralgo}.

\section{Exploring the Region of Attraction}%
\label{sec:exploring_the_region_of_attraction}

In order to increase the information gained from the trajectory data, we would like for each trajectory to explore a different region of the state space while remaining in the ROA inner-approximation. While the policy $\kappa$ synthesized from SOS programming ensures that system stays in the inner-approximation, it does not ensure that new areas of the state space will be explored. Therefore, as an alternative to the the control policy $\kappa(x)$, we propose an \emph{exploration policy} $\kappa_e(x)$ which guides the system to areas of the state space with little data.

The posterior variances $Var_i(x)$ of the GPs $\hat{w}_i$ can be used to track which areas of the state space have not been visited. In regions close to a data point, $Var_i(x)$ will be close to the noise level $\sigma_n$; in regions far from any data, $Var_i(x)$ will be close to the prior variance $k(x,x)$. Therefore, guiding the system to areas of high variance will lead it to areas which have not been explored. We can ensure this by choosing $\kappa_e$ to increase $Var_i(x(t))$, the variance of $\hat{w}_i$ at the current state, over time. To account for each $\hat{w}_i$, we will try to increase the sum $\sum_i Var_i(x(t))$.

The exploration policy will choose a control action by solving an optimization problem. The problem will be to maximize the time derivative of $\sum_i Var_i(x(t))$, the sum of the variances at the present system state. The derivative is
(dropping dependence on $x$ and $t$ for compactness of notation)
\begin{equation}
\begin{split}
    \tfrac{d}{dt} \sum_{i=1}^{n_x} Var_i &= \sum_{i=1}^{n_x} \tfrac{\partial Var_i}{\partial x} \dot{x} \\
    &= \left(\sum_{i=1}^{n_x} \tfrac{\partial Var_i}{\partial x}\right)(f+g\kappa_e + \hat{w}).
\end{split}
\end{equation}
To maximize this expression using $\kappa_e(x)$ as a decision variable, we need only consider the $(\sum_i \tfrac{\partial Var_i}{\partial x})g\kappa_e$ term.

At the same time, $\kappa_e(x)$ must not take the system outside of the ROA. Therefore, $\kappa_e$ must satisfy
\begin{align}
    \dot{V}_{low} &= \tfrac{\partial V}{\partial x}\cdot 
    (f + g\kappa_e + m - \eta \sigma) \le 0 \\
    \dot{V}_{up} &= \tfrac{\partial V}{\partial x}\cdot 
    (f + g\kappa_e + m + \eta \sigma) \le 0.
\end{align}
To ensure a unique solution, we will also include a quadratic regularizing term on $\kappa_e$ in the objective.
The form of the exploration policy $\kappa_e$ is then
\begin{equation}
\label{eq:exploration_policy}
\begin{aligned}
&\kappa_e(x) = &\text{arg }\underset{u}{\text{max}}
 & \left(\sum_{i=1}^{n_x} \tfrac{\partial Var_i}{\partial x}\right)g u
 - \lambda u^\top u \\
& & \text{s.t. }
 & \tfrac{\partial V}{\partial x}\cdot(f + g u + m - \eta \sigma) \le 0\\
 & & & \tfrac{\partial V}{\partial x}\cdot(f + g u + m + \eta \sigma) \le 0,
\end{aligned}
\end{equation}
where $\lambda > 0$ is a regularization parameter. The policy (\ref{eq:exploration_policy}) is a quadratic program, since for a fixed $x$ the objective is quadratic and the constraints linear in $u$.
Since quadratic programs can be efficiently solved in real time,
the exploration policy is suitable for online use.

When the policy $\kappa$ from (\ref{eq:sos_ROA}) exists, $u=\kappa(x)$ is a feasible solution to (\ref{eq:exploration_policy}). This means that (\ref{eq:exploration_policy}) is feasible when \eqref{eq:sos_ROA} is feasible.

\section{An Algorithm for Safe Learning}%
\label{sec:an_algorithm_for_safe_learning}

Algorithm \ref{alg:sos_gp} below shows how the results of sections \ref{sec:estimating_the_unmodeled_dynamics}, \ref{sec:estimating_the_region_of_attraction}, and \ref{sec:exploring_the_region_of_attraction} can be combined to perform safe exploration and robust policy synthesis.

The first step is to establish the prior information available for the system dynamics and encode it into a prior model. This comprises choosing the terms $f$ and $g$ in the control-affine base model, and selecting a prior kernel $k(x,y)$ for the unknown term. The base model $f$ and $g$ may come, for example, from a linearized model of the system. The kernel should be chosen so as to capture any further knowledge about the unknown part of the dynamics. 

With the prior model in place, the next step is to synthesize a prior control policy $\kappa^0$, a prior Lyapunov function $V^0$ and a prior $\gamma^0$ by solving the SOS program (\ref{eq:sos_ROA}) using Algorithm~\ref{alg:solve_sos}.
The prior Lyapunov function acts as a certificate that $\kappa^0$ stabilizes the equilibrium with high probability, that is for a large probability mass of candidates for $w$ admitted by the prior model. Sublevel sets of $V^0$ also act as inner-approximations of the ROA created by $\kappa^0$. We take the prior ROA as the sublevel set $\mathcal{R}^0=\{x\in\R^{n_x}|V^0(x) \le\gamma^0\}$. %

After synthesizing the prior ROA inner-approximation, the next step is to collect data to form the posterior model. This data will come from a system trajectory whose initial condition we may choose. In order to collect data safely, we choose an initial condition inside the prior ROA estimate (step \ref{ln:select_initial_condition}), so that the system is guaranteed to eventually return to the origin. Rather than use the prior policy $\kappa^0$, we will use the exploration policy $\kappa_e$ to guide the trajectory of the system during data collection (step \ref{ln:get_data}). This will ensure that the system trajectory visits regions where model variance (i.e. uncertainty) is high.

After collecting data, the next step (step \ref{ln:do_gp_regression}) is to compute the posterior model using (\ref{eq:gp_mean}) and (\ref{eq:gp_var}). With the posterior model, we can solve the SOS program (\ref{eq:sos_ROA}) using the posterior model (step \ref{ln:do_posterior_sos_analysis}) to synthesize a posterior policy $\kappa^1$ and posterior Lyapunov function $V^1$, and compute an inner-approximation of the ROA for the posterior policy. Since $\kappa^1$ and $V^1$ are computed using a more accurate model of the dynamics, the posterior ROA estimate will generally be larger than the one for the prior policy.

With the posterior policy in place, we can repeat steps \ref{ln:select_initial_condition} through \ref{ln:do_posterior_sos_analysis} to update the posterior model any number of times before stopping. Supposing that we perform $T$ iterations of this process, the final output of the algorithm will be the posterior model, the posterior policy $\kappa^T$, and the posterior ROA estimate $\mathcal{R}^T$.

\begin{algorithm}%
    \caption{Bayesian Safe ROA Learning with a polynomial GP model}
    \label{alg:sos_gp}
    \KwIn{Base model $f(x)+g(x)u$; prior kernel degree $p$ and hyperparameters $\{\alpha_i^2\}_{i=1}^p$; GP regression noise parameter $\sigma_n^2$; number $T$ of iterations.}
    \KwOut{Posterior control policy $\kappa^T$; posterior Lyapunov function $V^T$; posterior ROA $\mathcal{R}^T = \{x\in\R^{n_x} | V^T(x) \le \gamma^T\}$}

Construct the prior model $\dot{x}=f(x)+g(x)u+\hat{w}(x)$, where $\hat{w}(x)$ is a GP with mean zero and kernel $k(x,y)= \alpha_1^2(x^\top y)+\dotso+\alpha_p^2(x^\top y)^p$. Construct an empty data set $\mathcal{D}^0=\{\}$ \label{ln:construct_prior_model}\;
Solve the SOS program described in \eqref{eq:sos_ROA} using the prior model to compute the prior policy $\kappa^0$, prior Lyapunov function $V^0$, and prior ROA $\mathcal{R}^0$ \label{ln:do_prior_sos_analysis}\;
\For{$i \in \{1,\dotsc,T\}$} {
Select an initial condition $x_0^{i-1}\in\mathcal{R}^{i-1}$ \label{ln:select_initial_condition}\;
Collect data $\{x^{(j)}\}_{j=1}^{N^i}$ and $\{\dot{x}^{(j)}\}_{j=1}^{N^i}$ of $N^i$ points from a trajectory on the true dynamics with initial condition $x_0^{i-1}$, using the exploration policy $\kappa_e^i$ defined in (\ref{eq:exploration_policy}). Add this to the data set, setting $\mathcal{D}^i = \mathcal{D}^{i-1}\cup \{(x^{(j)}, \dot{x}^{(j)})\}_{j=1}^{N^i}$ \label{ln:get_data}\;
Use (\ref{eq:gp_mean}) and (\ref{eq:gp_var}) to compute the mean and variance of the GP with the data set $\mathcal{D}^i$ \label{ln:do_gp_regression}\;
Solve the SOS program described in \eqref{eq:sos_ROA} using the posterior model to compute the posterior policy $\kappa^i$, posterior Lyapunov function $V^i$, and posterior ROA $\mathcal{R}^i$ \label{ln:do_posterior_sos_analysis}\;
}
\end{algorithm}	

\section{Example: Inverted Pendulum with Input Saturation}%
\label{subsec:example_inverted_pendulum}

In this section, we demonstrate Algorithm \ref{alg:sos_gp} by using it to investigate the dynamics near the unstable equilibrium of a two-state inverted pendulum model.
The pendulum model, adapted from~\cite{berkenkamp2016safe}, includes an input saturation which prevents the system from being globally stabilized. Reference \cite{berkenkamp2016safe} analyzes the stability of this system for a fixed policy and Lyapunov function determined from a linearized model, and uses a GP with a non-polynomial kernel to verify a sublevel set of the fixed Lyapunov function as a ROA estimate. For our analysis, we do not need a prior safe policy and Lyapunov function to be given: we instead take a prior model
(also based on a linearization) and a kernel function, and use it to synthesize a prior controller and a ROA inner-approximation. We then collect a trajectory inside the prior ROA using the exploration policy, and use this data to compute a posterior model and synthesize a new policy and ROA. 

The true system dynamics for the pendulum are
\begin{equation}
\label{eq:pendulum_true_dynamics}
\begin{split}
    \dot{x}_1 &= x_2\\
    \dot{x}_2 &= \tfrac{g}{\ell}\sin(x_1)
               - \tfrac{\mu}{M\ell^2}x_2
               + \tfrac{1}{M\ell^2}\text{Sat}(u),
\end{split}
\end{equation}
where $M$ is the mass of the pendulum, $\ell$ is its length, and $g$ is gravitational acceleration.
The coordinates are chosen so that $x_1=0, x_2=0$ is the unstable equilibrium. 
The $\text{Sat}(\cdot)$ function limits the input action to stay within  the range $[-Mg\ell\sin(30^\circ), Mg\ell\sin(30^\circ)]$. With this input saturation in place, the inverted pendulum cannot return to the upright position once it deviates from upright by more than 30 degrees.

The input saturation also means that this system is not input-affine. To remedy this, we use the formulation from the Remark in Section \ref{sec:preliminaries}: we introduce an auxiliary \emph{input state} $x_u$ and augment (\ref{eq:pendulum_true_dynamics}) to
\begin{equation}
\label{eq:pendulum_augmented_dynamics}
\begin{split}
    \dot{x}_1 &= x_2\\
    \dot{x}_2 &= \tfrac{g}{\ell}\sin(x_1)
               - \tfrac{\mu}{M\ell^2}x_2
               + \tfrac{1}{M\ell^2}\text{Sat}(x_u)\\
    \dot{x}_u &= v.
\end{split}
\end{equation}

We start the algorithm with the linearization of  (\ref{eq:pendulum_augmented_dynamics}); that is, we take
\begin{equation}
\label{eq:pendulum_linearized_dynamics}
f(x) = 
\begin{bmatrix}
x_2 \\ \tfrac{g}{\ell}x_1 - \tfrac{\mu}{M\ell^2}x_2 + \tfrac{1}{M\ell^2}x_u\\ 0
\end{bmatrix},
\quad
g(x) = 
\begin{bmatrix}
0 \\ 0 \\ 1
\end{bmatrix}
\end{equation}
as the inputs $f$ and $g$ to Algorithm \ref{alg:sos_gp}.
For the prior kernel, we use the degree 3 kernel
\begin{equation}
    \label{eq:pendulum_prior_kernel}
    k(x,y) = \alpha_1^2(x^\top y) + \alpha_2^2(x^\top y)^2 + \alpha_3^2(x^\top y)^3.
\end{equation}
Since the dynamics of $x_1$ are purely kinematic, we can assume that the given model is accurate. Similarly, since $x_u$ is a constructed state, we can assume its dynamics are accurate. Therefore, we assume that the vector of unknown dynamics has the form $w(x) = [0~~w_2(x)~~0]^\top $, requiring only one GP model for the unknown dynamics of $x_2$.

To complete the prior model, we select kernel hyperparameters for $w_2(x)$.
We will take as prior knowledge that our linearization is accurate, and that the nonlinear terms contain a strong odd component. We incorporate this knowledge into the prior model by setting $\alpha_1^2$ and $\alpha_2^2$ to a small value, 
namely $\alpha_1^2=\alpha_2^2=0.075$. Since we have no further prior knowledge of the third-order term, we will set $\alpha_3^2$ to be larger than $\alpha_1$ and $\alpha_2$, namely $\alpha_3^2=1.5$. We will also assume that our $\dot{x}$ measurements, taken from a finite-difference approximation on the observed states, are reasonably accurate, and use this knowledge by setting the GP regression noise parameter $\sigma_n^2$ to a low value value, namely $\sigma_n^2=0.01$.

\begin{figure}[h]
    \centering
    \includegraphics[width=0.5\textwidth]{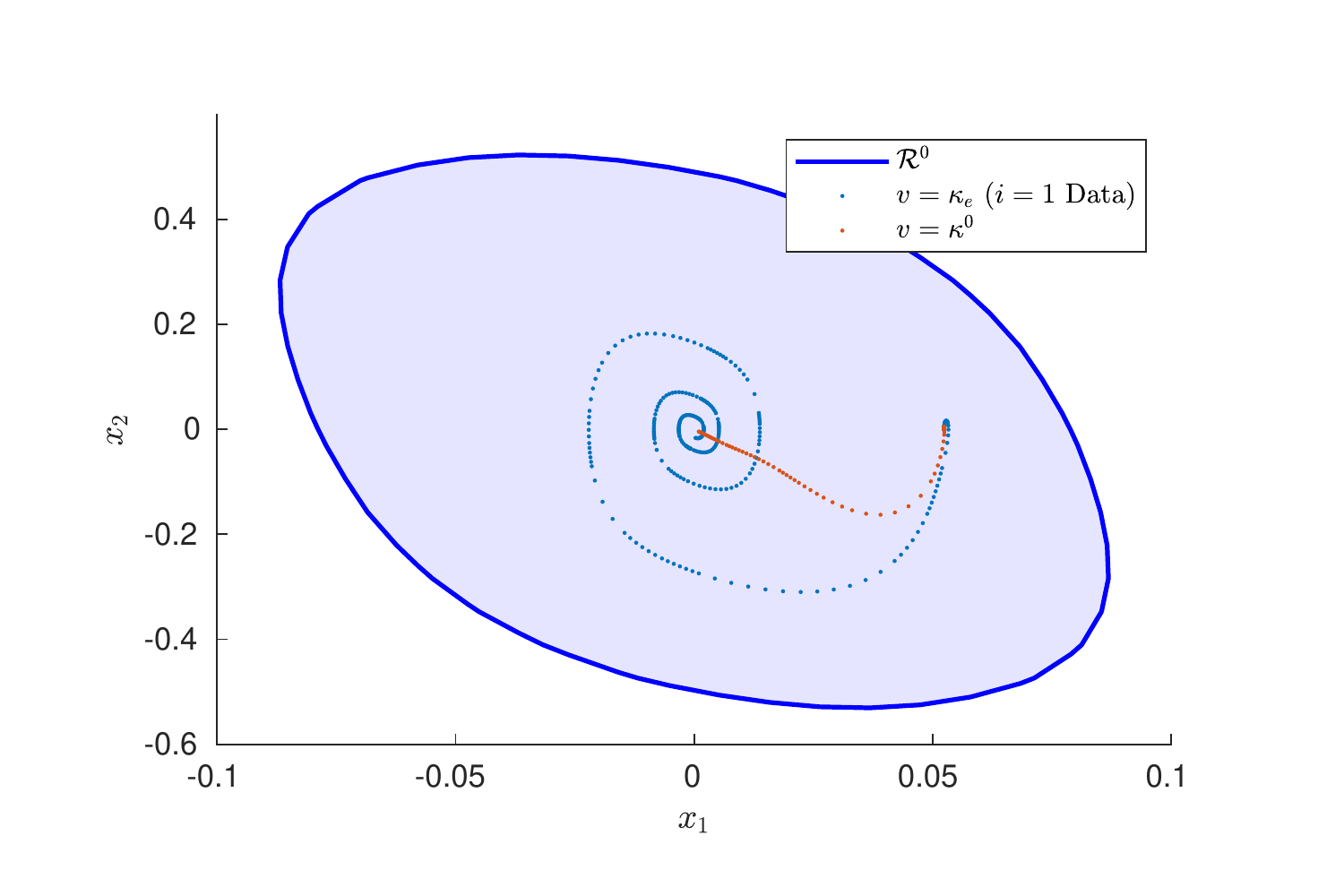}
    \caption{The prior ROA computed using the prior system (\ref{eq:pendulum_linearized_dynamics}) and prior kernel (\ref{eq:pendulum_prior_kernel}), projected onto $x_1$ and $x_2$. Two trajectories are also shown using the two prior control policies, the base SOS policy $v(t)=\kappa(x(t))$ and the exploration policy $v(t)=\kappa_e(x(t))$. The exploration policy visits more of the state space than the base policy.}
    \label{fig:pendulum_prior_roa}
\end{figure}

Figure \ref{fig:pendulum_prior_roa} shows the results of lines \ref{ln:do_prior_sos_analysis}, \ref{ln:select_initial_condition}, and \ref{ln:get_data} of Algorithm \ref{alg:sos_gp} using the selected $f$, $g$, and $k$. The decision variables in the SOS analysis---$V^0$, $\kappa^0$, and the S-procedure certificates---$s_V, s_{d,i}, s_\gamma$ are degree 4 polynomials, and we take $\eta=3$.
The prior ROA certifies that the prior policy $\kappa^0$ can restore angle deviations in the range of about $\pm 4.5$ degrees, starting from rest, in the presence of any $w_2$ that is bounded above and below by $-\eta\sigma_2(x) \le w_2(x) \le \eta\sigma_2(x)$, where $\sigma_2(x) = \sqrt{k(x,x)}$ is the prior variance.

For step \ref{ln:select_initial_condition}, we select an initial condition which starts from rest with an initial angle deviation of $3^\circ$; that is, we take $x_1=3\pi/180$, $x_2=x_u=0$. Figure \ref{fig:pendulum_prior_roa} shows data from two trajectories on the true dynamics with this initial condition. One trajectory, following step \ref{ln:get_data} of Algorithm \ref{alg:sos_gp}, uses the exploration policy $v(t)=\kappa_e(x(t))$. This trajectory will be used as the data set $\mathcal{D}^0$ for the next step. 
The other trajectory uses the prior policy $\kappa^0$, from the same initial condition chosen in step \ref{ln:select_initial_condition}. The exploration policy
provides data from a wider area of the prior ROA before settling to the equilibrium, by allowing more transients to remain than the prior policy.
\begin{figure}[h]
    \centering
    \includegraphics[width=0.45\textwidth]{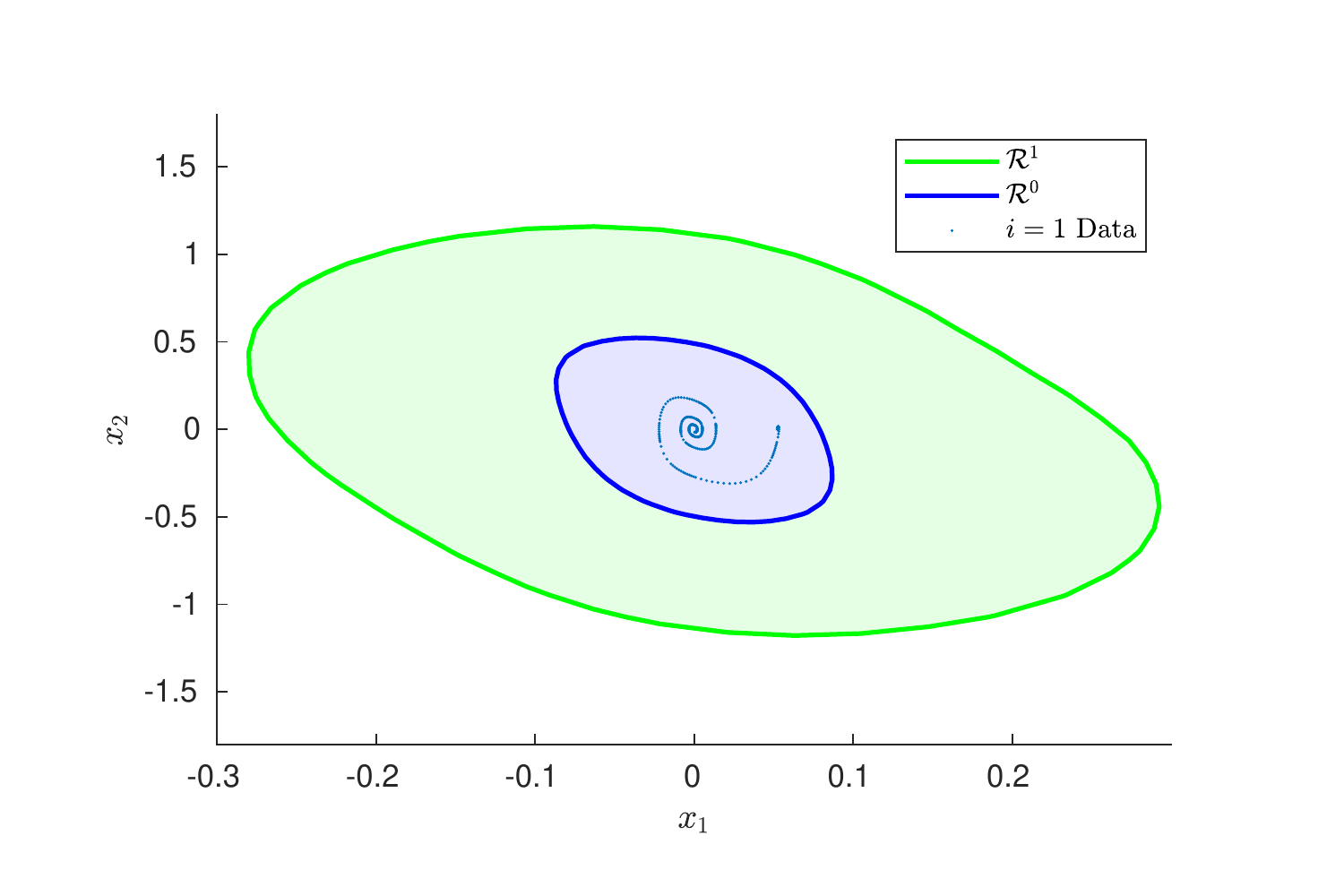}
    \caption{The posterior ROA, projected onto $x_1$ and $x_2$. The posterior model incorporates the data collected by the exploration trajectory from iteration $i=1$ of Algorithm \ref{alg:sos_gp}.}
    \label{fig:pendulum_posterior_roa}
\end{figure}

Figure \ref{fig:pendulum_posterior_roa} shows the posterior ROA $\mathcal{R}^1$ computed by step \ref{ln:do_posterior_sos_analysis} of Algorithm \ref{alg:sos_gp}. The posterior model was computed using the data set $\mathcal{D}^0$ comprising the $x$ data points from the prior trajectory with the exploration policy and finite-difference approximations for $\dot{x}$. With $T=1$, this is the final step of the algorithm. By incorporating the trajectory data, the posterior analysis successfully extends the size of the ROA. In particular, the range of safe angle deviations from rest is extended to $\pm 16.5$ degrees. 

There are two mechanisms which allow the posterior ROA to be larger. First, the posterior model more closely matches the true dynamics than the prior, since it includes higher-order terms that are fit from data. Second, the posterior variance is less than the prior variance at all points in the state space. Since the variance determines the constraints on $w$ in the SOS problem, the posterior controller can be robust against a smaller class of unknowns than the prior model while upholding the same probabilistic guarantee.
\begin{figure}[h]
    \centering
    \includegraphics[width=0.5\textwidth]{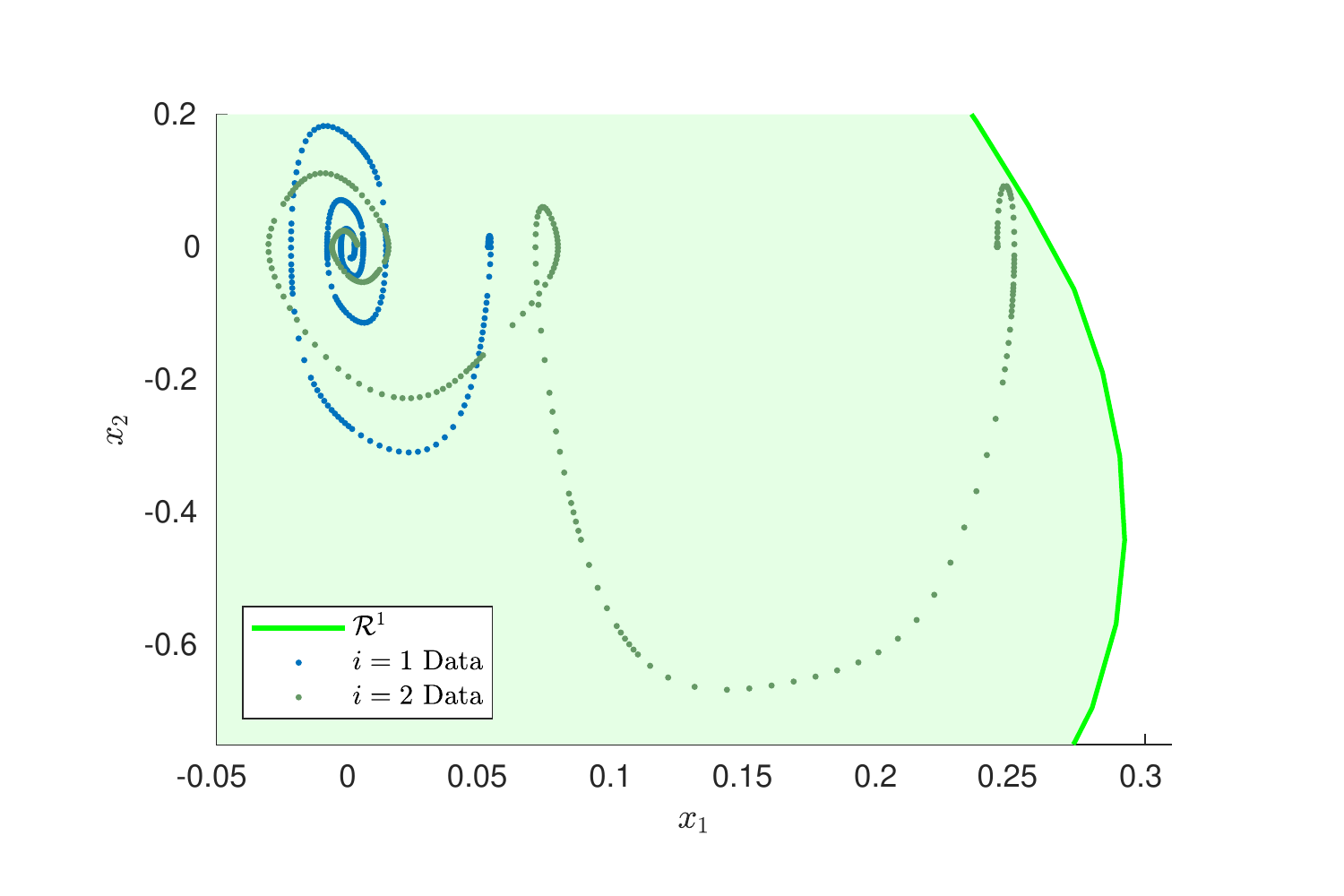}
    \caption{The exploration trajectory from iteration $i=2$ of Algorithm \ref{alg:sos_gp}, projected onto $x_1$ and $x_2$. The objective of (\ref{eq:exploration_policy}) encourages the exploration policy to avoid the data from previous trajectories.}
    \label{fig:pendulum_posterior-trajectory}
\end{figure}

Though only one trajectory is needed to complete one iteration of Algorithm \ref{alg:sos_gp}, we demonstrate how the exploration policy responds to data from previous iterations by simulating an exploration trajectory for the $i=2$ iteration. 
To that end, we pick an initial condition in the posterior ROA, starting from rest with an initial angle deviation of $14^\circ$, and simulate a trajectory on the true dynamics using the exploration policy guided by the posterior variance. 

The resulting trajectory is shown in Figure \ref{fig:pendulum_posterior-trajectory}.
Recall that, by maximizing the objective in (\ref{eq:exploration_policy}), the exploration policy is encouraged to increase the total variance of states that the system visits. The exploration policy increases the information gained by the $i=2$ trajectory in two ways. First, it avoids the $i=1$ trajectory as much as possible, so that it does not collect data in parts of the state space that have already been visited. Second, it adds several excitations into the trajectory---once at the start of the trajectory, and again near $(x_1=.075,x_2=0)$-- where it briefly reverses direction, increasing the amount of time the system can explore before settling into the equilibrium.

\section{Conclusion}%
\label{sec:conclusion}

The proposed method can take an initial prior model for the dynamics of a system and improve the model using data, while ensuring that the process of collecting takes place in a safe ROA. Since the GP model learns an estimate for the system dynamics in closed-form with quantified uncertainty, the learned model can be guaranteed safe with high confidence.
The proposed method lifts two limitations faced by earlier work in safe learning with GPs. First, we are not restricted to a fixed, given policy and Lyapunov function: using polynomial kernel functions allows for policies and Lyapunov functions to be synthesized by SOS analysis. Second, we do not need to assume the existence of an \emph{a priori} safe controller to initialize the safe learning process: by establishing prior information into a Bayesian prior model, we can compute an exploration controller which is guaranteed to be safe on the prior model dynamics.

However, the restriction to polynomial kernels places a limit on the types of unknown dynamics the system can learn. In particular, the condition that the unknown dynamics be well-approximated by a polynomial prohibits the method from learning dynamics with discrete transitions or discontinuities. Extending the method to work on a larger class of dynamics would increase the utility of the method. Another useful extension would be to allow for the synthesis of other types of safety guarantees than ROAs for learned systems, for instance barrier certificates or reachable sets.

\bibliographystyle{IEEEtran}
\bibliography{IEEEabrv,refs}
\appendix
\section{Appendices}
\subsection{Proof for Proposition~\ref{prop:krause_inequality_holds}}\label{app:prfprop4}
\begin{proof}
By Assumption \ref{a:uncertainty_approximated_by_polynomial}, we know that in a region $\mathcal{X}$ containing the origin, each $w_i$ can be approximated by a polynomial $q_i$ with uniform error $\epsilon$. The measurements of $w_i$ are effectively measurements of $q_i$ corrupted by this uniformly-bounded noise.
Let $p_{q_i}$ be the degree of this polynomial, and let $k_i(x,y)$ be a kernel of the form (\ref{eq:poly_kernel_zao}) with $p\ge p_{q_i}$. Then $q_i\in\mathcal{H}(k_i)$, by Propositions \ref{prop:rkhs_homogeneous} and \ref{prop:rkhs_sum}. Since $\mathcal{H}(k_i)$ is finite-dimensional, the norm $\norm{q_i}_{k_i}$ is finite.
By assumption \ref{a:bounded_noise}, the measurements of $w_i$ are also subject to an additional noise uniformly bounded by $\sigma_n$. Since the measurements of $w_i$ act as measurements of $q_i$ with noise uniformly bounded by $\sigma_n + \epsilon$, and $\norm{q_i}_{k_i}\le\infty$, the function $q_i$ and the kernel $k_i$ satisfy the assumptions of Lemma \ref{lem:krause_inequality}.

\end{proof}

\subsection{Iterative algorithm for solving the SOS problem \eqref{eq:sos_ROA}}\label{app:iteralgo}
\begin{algorithm} [h]
	\caption{Iterative method for solving \eqref{eq:sos_ROA}}
	\label{alg:solve_sos}
		\KwIn{function $\bar V$ such that constraints \eqref{eq:sos_ROA} are feasible by proper choice of $s_V, s_{d,i}, \kappa, \gamma$.}
		\KwOut{($\kappa$, $\gamma$, $V$) such that with the volume of $\mathcal{R}$ having been enlarged.}
		\For{$j \in \{1,...,N_{iter}$\}}{
		$\boldsymbol{\gamma}$\textbf{-step}: decision variables $(s_V, s_{d,i}, \kappa,\gamma)$.
		Maximize $\gamma$ subject to \eqref{eq:sos_ROA}  using $V = \bar V$. 
		This yields ($\bar s_V, \bar \kappa$) and optimal reward $\bar \gamma$.
		
		$\boldsymbol{V}\textbf{-step}$: decision variables $(s_\gamma, s_{d,i},  V)$; 
		Maximize the feasibility subject to \eqref{eq:sos_ROA} as well as 
			$s_\gamma \in \Sigma[x]$, and
			\begin{align}
			(\bar \gamma - V) +  (\bar V - \bar \gamma) s_\gamma \in \Sigma[x], \label{eq:levelset_grow}
			\end{align}
			using $\gamma = \bar \gamma, s_V = \bar s_V,  \kappa=\bar \kappa$. This 
			yields $\bar V$.
		}
\end{algorithm}
A linear state feedback for the linearization of $f$ and $g$ about the origin is used to compute the initial iterate, $\bar V$.
\end{document}